% Example file for IZS 2012
%
% derived from bare_conf.tex, V1.3, 2007/01/11, by Michael Shell
% for use with IEEEtran.cls version 1.7 or later
%
% Support sites for IEEEtran.cls:
% http://www.michaelshell.org/tex/ieeetran/
% http://www.ctan.org/tex-archive/macros/latex/contrib/IEEEtran/

%\documentclass[letterpaper,conference]{IEEEtran}
% Please use A4 format for the IZS documents
\documentclass[a4paper,conference]{IEEEtran}

% depending on your installation, you may wish to adjust the top margin:
%\addtolength{\topmargin}{1in}
\addtolength{\topmargin}{9mm}

% Add packages as you please:
%
\usepackage{ifthen}
\usepackage[cmex10]{amsmath} % Use the [cmex10] option 
   % to ensure compliance with IEEE Xplore (see bare_conf.tex)
\usepackage{macros}

% Some useful packages (and compatibility issues with the IEEE format)
% are pointed out in the file Some_LaTeX_Packages.txt, which is 
% taken verbatim out of bare_conf.tex by Michael Shell

% ------------------------------------------------------------

% *** Do not adjust lengths that control margins, column widths, etc. ***
% *** Do not use packages that alter fonts (such as pslatex).         ***

% ------------------------------------------------------------

% correct bad hyphenation here
\hyphenation{op-tical net-works semi-conduc-tor}

% ------------------------------------------------------------

\newtheorem{theorem}{Theorem}

\begin{document}
%\title{Hard Decisions Cause a Power Loss\\ for Noncoherent Fading Channels}
\title{One-Bit Quantizers for Fading Channels}

\author{%
\IEEEauthorblockN{Tobias Koch}
\IEEEauthorblockA{University of Cambridge\\
              Cambridge, CB2 1PZ, UK\\
              Email: tobi.koch@eng.cam.ac.uk}

\and

\IEEEauthorblockN{Amos Lapidoth}
\IEEEauthorblockA{ETH Zurich\\
              CH-8092 Zurich, Switzerland\\
              Email: lapidoth@isi.ee.ethz.ch}
}

\maketitle

\begin{abstract}
We study channel capacity when a one-bit quantizer is employed at the output of the discrete-time average-power-limited Rayleigh-fading channel. We focus on the low signal-to-noise ratio regime, where communication at very low spectral efficiencies takes place, as in Spread Spectrum and Ultra-Wideband communications. We demonstrate that, in this regime, the best one-bit quantizer does not reduce the asymptotic capacity of the coherent channel, but it does reduce that of the noncoherent channel.
\renewcommand{\thefootnote}{}
  \footnote{T.~Koch has received funding
    from the European Community's Seventh Framework Programme
    (FP7/2007-2013) under grant agreement No. 252663.}
\end{abstract}

\section{Introduction}
\label{sec:intro}
We study the effect on channel capacity of quantizing the output of the discrete-time average-power-limited Rayleigh-fading channel using a one-bit quantizer. This problem arises in communication systems where the receiver uses digital signal processing techniques, which require the analog received signal to be quantized using an analog-to-digital converter (ADC). The effects of quantization are particularly pronounced when high-resolution ADCs are not practical and low-resolution ADCs must be used \cite{walden99}.

We focus on the low signal-to-noise ratio (SNR) regime, where communication at very low spectral efficiencies takes place (as in Spread-Spectrum and Ultra-Wideband communications). For the average-power-limited \emph{real-valued Gaussian channel}, it is well-known that, in this regime, a \emph{symmetric one-bit quantizer} (which produces $1$ if the channel output is nonnegative and $0$ otherwise) reduces the capacity by a factor of $2/\pi$, corresponding to a 2dB power loss \cite{viterbiomura79}. It was recently shown that, by allowing for \emph{asymmetric one-bit quantizers} with corresponding \emph{asymmetric signal constellations}, these two decibels can be recovered in full \cite{kochlapidoth11_1}. A similar result was shown for the average-power-limited \emph{complex-valued Gaussian channel} \cite{zhangwillemshuang11}: using binary on-off keying and a \emph{radial quantizer} (which produces $1$ if the magnitude of the channel output is above some threshold and $0$ otherwise), one can achieve the low-SNR asymptotic capacity of the unquantized channel by judiciously choosing the threshold and the on-level as functions of the SNR.
Here we extend \cite{kochlapidoth11_1,zhangwillemshuang11} to \emph{Rayleigh-fading channels}. Specifically, we study the capacity per unit-energy \cite{verdu90} of such channels when the channel output is quantized using a one-bit quantizer. 

For \emph{coherent} fading channels, where the receiver has perfect channel knowledge, we show that quantizing the channel output with a one-bit quantizer causes no loss in the capacity per unit-energy. As in \cite{zhangwillemshuang11}, the capacity per unit-energy can be achieved using binary on-off keying and a radial quantizer by choosing the threshold as a function of the SNR and the fading, with the threshold and the on-level both tending to infinity as the SNR tends to zero. This result might mislead one to think that quantizing the channel output with a one-bit quantizer causes no loss in the capacity per unit-energy also for \emph{noncoherent} fading channels, where the receiver does not have perfect channel knowledge. Indeed, in the absence of a quantizer the capacity per unit-energy does not depend on whether the receiver has perfect channel knowledge or not \cite{verdu02,lapidothshamai02}. Since this capacity per unit-energy can be achieved using binary on-off keying with diverging on-level, it might therefore seem plausible that also in the presence of a quantizer the capacity per unit-energy would not depend on whether the receiver has perfect channel knowledge or not. But this is not the case: in contrast to the coherent case, quantizing the output of the \emph{noncoherent} Rayleigh-fading channel with a one-bit quantizer reduces the capacity per unit-energy.

The rest of the paper is organized as follows. Section~\ref{sec:channel} describes the channel model and introduces the capacity per unit-energy. Section~\ref{sec:1bit} presents the main results. Section~\ref{sec:2bit} discusses the capacity per unit-energy when the real and the imaginary part of the channel output are quantized separately with one-bit quantizers. And Section~\ref{sec:proof} presents the proofs of the main results.

\section{Channel Model and Capacity per Unit-Energy}
\label{sec:channel}
We consider a discrete-time Rayleigh-fading channel whose complex-valued output $\tilde{Y}_k$ at time $k\in\Integers$ corresponding to the channel input $x_k\in\Complex$ (where $\Complex$ and $\Integers$ denote the set of complex numbers and the set of integers) is given by
\begin{equation}
\label{eq:channel}
\tilde{Y}_k = H_k x_k + Z_k, \quad k\in\Integers.
\end{equation}
Here $\{Z_k,\,k\in\Integers\}$ and $\{H_k,\,k\in\Integers\}$ are independent sequences of independent and identically distributed (i.i.d.), zero-mean, circularly-symmetric, complex Gaussian random variables, the former with unit variance and the latter with variance $\sigma^2$. 
%We assume that $\{H_k,\,k\in\Integers\}$ and $\{Z_k,\,k\in\Integers\}$ are independent and of a joint law that does not depend on $\{x_k,\,k\in\Integers\}$. 
We say that the channel is \emph{coherent} if the receiver is cognizant of the realization of $\{H_k,\,k\in\Integers\}$ and that it is \emph{noncoherent} if the receiver is cognizant only of the statistics of $\{H_k,\,k\in\Integers\}$.

The receiver does not have access to the channel outputs $\{\tilde{Y}_k,\,k\in\Integers\}$ but only to a quantized version thereof. Specifically, the complex channel output $\tilde{Y}_k$ is fed to a one-bit quantizer which produces $Y_k=1$ if $\tilde{Y}_k$ is in the quantization region $\set{D}$ and $Y_k=0$ otherwise, for some Borel set $\set{D}\subset\Complex$. In the coherent case, $\set{D}$ may depend on the fading $\{H_k,\,k\in\Integers\}$.

We assume that the average power of the channel inputs is limited by $\const{P}$. The capacity of the above channel is \cite{gallager68}, \cite{biglieriproakisshamai98}
\begin{IEEEeqnarray}{lCll}
C(\const{P}) & = & \sup I(X;Y|H), \quad & \textnormal{coherent case} \label{eq:capacity_coh}\\
C(\const{P})&  = & \sup I(X;Y), \quad & \textnormal{noncoherent case} \label{eq:capacity}
\end{IEEEeqnarray}
where the suprema on the right-hand side (RHS) of \eqref{eq:capacity_coh} and \eqref{eq:capacity} are over all distributions on $X$ satisfying $\E{|X|^2}\leq\const{P}$ and over all quantization regions $\set{D}$. (Since the above channel is memoryless, we omit the time indices.)

The capacity per unit-energy is given by \cite[Th.~2]{verdu90}
\begin{equation}
\dot{C}(0) = \sup_{\const{P}>0} \frac{C(\const{P})}{\const{P}}. \label{eq:Cdotsup}
\end{equation}
It can be shown that
\begin{equation}
\dot{C}(0) = \lim_{\const{P}\downarrow 0} \frac{C(\const{P})}{\const{P}}.
\end{equation}
Thus, the capacity per unit-energy is equal to the slope at zero of the capacity-vs-power curve. It can be further shown that \cite[Th.~3]{verdu90} (see also \cite{verdu02})
\begin{equation}
\label{eq:CUC_coh}
\dot{C}(0) = \sup_{\xi\neq 0, \set{D}} \frac{D\bigl(P_{Y|H,X=\xi}\bigm\| P_{Y|H,X=0}\bigm| P_H\bigr)}{|\xi|^2}
\end{equation}
in the coherent case and
\begin{equation}
\label{eq:CUC_complexpr}
\dot{C}(0) = \sup_{\xi\neq 0, \set{D}} \frac{D\bigl(P_{Y|X=\xi}\bigm\| P_{Y|X=0}\bigr)}{|\xi|^2}
\end{equation}
in the noncoherent case. Here $D(\cdot\|\cdot)$ denotes relative entropy
\begin{equation*}
D(P\|Q) = \left\{\begin{array}{ll}\displaystyle \int \log\biggl(\frac{\d P}{\d Q}\biggr) \d P, & \textnormal{if $P \ll Q$}\\[5pt]
\infty, & \textnormal{otherwise} \end{array} \right.
\end{equation*}
(where $P\ll Q$ indicates that $P$ is absolutely continuous with respect to $Q$); $D(\cdot\|\cdot|\cdot)$ denotes conditional relative entropy
\begin{IEEEeqnarray}{lCl}
\IEEEeqnarraymulticol{3}{l}{D\bigl(P_{Y|H,X=\xi}\bigm\| P_{Y|H,X=0}\bigm| P_H\bigr)} \nonumber\\
\qquad  & = &  \int D\bigl(P_{Y|H=h,X=\xi}\bigm\| P_{Y|H=h,X=0}\bigr) \d P_H(h); \nonumber
%D\bigl(P_H\bigm\| Q_H \bigm| R\bigr) = \int D\bigl(P_{H=h}\bigm\| Q_{H=h}\bigr) \d R(h)
\end{IEEEeqnarray}
$P_H$ denotes the distribution of the fading $H$; $P_{Y|X=x}$ denotes the output distribution given that the input is $x$; and $P_{Y|H=h,X=x}$ denotes the output distribution conditioned on $(H,X)=(h,x)$.

By the Data Processing Inequality \cite[Th.~2.8.1]{coverthomas91}, the capacity per unit-energy of the quantized channel is upper-bounded by that of the unquantized channel \cite{lapidothshamai02,verdu02}
\begin{equation}
\label{eq:CUC_unquant}
\dot{C}(0) \leq \frac{1}{\sigma^2}.
\end{equation}
We show that in the coherent case this upper bound holds with equality, while in the noncoherent case it is strict.

\section{Main Result}
\label{sec:1bit}
We restrict ourselves to \emph{radial} quantizers, for which
\begin{equation}
\label{eq:D_circsym}
\set{D} = \bigl\{\tilde{y}\in\Complex\colon |\tilde{y}|\geq\const{T}\bigr\}, \quad \textnormal{for some $\const{T}>0$}.
\end{equation}
In the noncoherent case---as we show in Section~\ref{sub:noncoherent}---such quantizers are optimal in the sense that they maximize the relative entropy on the RHS of \eqref{eq:CUC_complexpr} for every $\xi\neq 0$. In the coherent case such quantizers need not be optimal in the above sense. However, they suffice to achieve the capacity per unit-energy. And such quantizers have the practical advantage of not requiring knowledge of the phase of $\tilde{y}$.

%The quantization region \eqref{eq:D_circsym} has the advantage that it does not require knowledge of the %phase of the channel output.

\begin{theorem}
\label{theorem}
Consider the above channel model, and assume that the channel output is quantized using a one-bit quantizer.
\begin{enumerate}
\item \label{thm:coherent} In the \emph{coherent case},
\begin{equation}
\dot{C}(0) = \frac{1}{\sigma^2}
\end{equation}
which can be achieved by some radial quantizer \eqref{eq:D_circsym} with $\const{T}$ depending on $H$ and $\xi$.\footnote{Here and throughout this paper, $\xi$ refers to the parameter in \eqref{eq:CUC_coh} or \eqref{eq:CUC_complexpr}.}
\item \label{thm:noncoherent}
In the \emph{noncoherent case},
\begin{equation}
\dot{C}(0) < \frac{1}{\sigma^2}
\end{equation}
with the inequality being strict.
\end{enumerate}
\end{theorem}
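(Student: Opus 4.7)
Both parts share the data-processing upper bound \eqref{eq:CUC_unquant}; the content of the theorem lies in matching it in the coherent case and in strictly improving on it in the noncoherent case. The plan is to evaluate the Verd\'u formulas \eqref{eq:CUC_coh} and \eqref{eq:CUC_complexpr} on radial quantizers $\set{D}=\bigl\{|\tilde{y}|\ge\const{T}\bigr\}$ and, in each case, to reduce the problem to analysis of a Bernoulli KL divergence.

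For part~\ref{thm:coherent}, I would mimic the AWGN construction of \cite{zhangwillemshuang11}, now allowing the threshold to depend on the fading. For a fixed small $\epsilon>0$, I would set $\const{T}^2(h,\xi)=|h\xi|^2/(1+\epsilon)$, so that, conditional on $H=h$, the mean $h\xi$ of $\tilde{Y}$ lies slightly outside the forbidden disk. Writing
\begin{IEEEeqnarray}{rCl}
q_0(h) & := & \Pr\bigl\{|Z|\ge\const{T}(h,\xi)\bigr\}=e^{-\const{T}(h,\xi)^2}, \nonumber \\
q_1(h) & := & \Pr\bigl\{|h\xi+Z|\ge\const{T}(h,\xi)\bigr\}, \nonumber
\end{IEEEeqnarray}
a Rician-tail estimate gives $q_1(h)\to 1$ as $|h\xi|\to\infty$, so the binary KL divergence is asymptotically $-\log q_0(h)=|h\xi|^2/(1+\epsilon)$. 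Integrating against $P_H$, dividing by $|\xi|^2$, and then letting $|\xi|\to\infty$ followed by $\epsilon\downarrow 0$ matches the data-processing upper bound.

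For part~\ref{thm:noncoherent}, I would first invoke the optimality of radial quantizers promised in Section~\ref{sub:noncoherent}, which reduces the supremum in \eqref{eq:CUC_complexpr} to one over $\const{T}>0$ and $\xi\ne 0$. Marginalising $H$ makes $\tilde{Y}\mid X=\xi$ a zero-mean circularly symmetric complex Gaussian with variance $1+|\xi|^2\sigma^2$, so $p_0=e^{-\const{T}^2}$ and $p_1=e^{-\const{T}^2/(1+|\xi|^2\sigma^2)}>p_0$. Because $p_1>p_0$, the second summand of the Bernoulli KL divergence is strictly negative, yielding
\begin{equation*}
\frac{D\bigl(P_{Y|X=\xi}\bigm\|P_{Y|X=0}\bigr)}{|\xi|^2}<\frac{p_1\log(p_1/p_0)}{|\xi|^2}=\sigma^2\,\beta\,e^{-\beta}
\end{equation*}
with $\beta:=\const{T}^2/(1+|\xi|^2\sigma^2)$. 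Using $\max_{\beta>0}\beta e^{-\beta}=1/e$ gives the uniform bound $\sigma^2/e$, which is strictly smaller than the coherent value established in part~\ref{thm:coherent}, and therefore proves the strict inequality.

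The main obstacle, to my mind, is the optimality of radial quantizers invoked in part~\ref{thm:noncoherent}: without it the sup in \eqref{eq:CUC_complexpr} ranges over arbitrary Borel sets $\set{D}\subset\Complex$, which is considerably harder to control, as the two-point output distribution can be perturbed in infinitely many directions. Once that fact is in hand, both parts come down to elementary Rician/Rayleigh-tail estimates and the one-variable maximisation of $\beta\mapsto\beta e^{-\beta}$.
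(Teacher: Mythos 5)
Your coherent argument follows the same route as the paper's: a radial quantizer with a fading-dependent threshold placed just inside the conditional mean $|h\xi|$, the Bernoulli KL divergence lower-bounded by dropping the nonnegative cross term and bounding $H_b$ by $\log 2$, and a Rician concentration estimate (the paper uses a closed-form lower bound on the Marcum $Q$-function; your qualitative tail estimate, combined with dominated convergence to pass the $|\xi|\to\infty$ limit through the expectation over $H$, serves the same purpose; your choice $\const{T}^2=|h\xi|^2/(1+\epsilon)$ is exactly the paper's $\const{T}=\mu|h||\xi|$ with $\mu=1/\sqrt{1+\epsilon}$). One consistency warning: the model paragraph has a misprint and the proofs are actually written for unit-variance fading and noise of variance $\sigma^2$ — otherwise neither the upper bound $1/\sigma^2$ nor the probabilities $e^{-\const{T}^2/\sigma^2}$ and $e^{-\const{T}^2/(|\xi|^2+\sigma^2)}$ in Section~\ref{sec:proof} follow. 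With the normalisation you adopted ($\E{|H|^2}=\sigma^2$, unit-variance $Z$) your calculation lands at $\sigma^2$, not $1/\sigma^2$, so your formulas need the variances swapped to match the theorem.

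For the noncoherent part you correctly identify the real obstacle — the optimality of radial quantizers — but you do not supply it: you invoke it from Section~\ref{sub:noncoherent}, which \emph{is} the proof you are being asked to write, so as it stands the argument is circular. This is the genuine gap. The paper closes it by fixing $\beta=\Prob(Y=1\mid X=\xi)$ and noting that the resulting Bernoulli KL is convex in $p_0=\Prob(Y=1\mid X=0)$, hence maximised at an extreme value of $p_0$ over all Borel $\set{D}$ with the given $\beta$; by the Neyman--Pearson lemma these extremes are attained by a likelihood-ratio test between $P_{\tilde{Y}|X=0}$ and $P_{\tilde{Y}|X=\xi}$, and since both are zero-mean circularly-symmetric Gaussians the likelihood ratio is a monotone function of $|\tilde{y}|$ alone — so the extremal region is a disc complement (radial) or a disc (whose complement is radial and yields the same KL after relabelling the output). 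Once that reduction is in place, your explicit estimate
\begin{equation*}
D\bigl(P_{Y|X=\xi}\bigm\|P_{Y|X=0}\bigr) < p_1\log\frac{p_1}{p_0}
 = \frac{|\xi|^2}{\sigma^2}\,\beta\,e^{-\beta} \le \frac{|\xi|^2}{e\,\sigma^2},
\qquad \beta=\frac{\const{T}^2}{|\xi|^2+\sigma^2},
\end{equation*}
is in fact slightly cleaner than the paper's: it is uniform in $\xi$ and $\const{T}$, so it yields $\dot{C}(0)\le 1/(e\sigma^2)<1/\sigma^2$ in one stroke, whereas the paper discards the binary entropy term rather than the strictly negative second summand, picks up an additive $1/e$, and therefore needs a separate data-processing bound \eqref{eq:proof_prop8_1} to handle bounded $|\xi|$ before taking a $\varlimsup$. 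So: same endgame and a cleaner final estimate, but a genuine gap in the middle that you must fill with the Neyman--Pearson argument.
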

\begin{proof}
See Section~\ref{sec:proof}.
\end{proof}

\section{Quantizing the Real and Imaginary Part}
\label{sec:2bit}
Instead of quantizing $\tilde{Y}$ using a one-bit quantizer, often the real and imaginary parts of $\tilde{Y}$ are quantized separately using a one-bit quantizer for each; see, e.g., \nocite{mezghaninossek07}\nocite{mezghaninossek08}\nocite{mezghaninossek09}\nocite{kronefettweis10} \cite{mezghaninossek07}--\cite{kronefettweis10}, \cite{verdu02}. Thus, the first quantizer produces $Y_{\text{R},k}=1$ if $\bigRe{\tilde{Y}_k}\in\set{D}_{\text{R}}$ and $Y_{\text{R},k}=0$ otherwise, and the second quantizer produces $Y_{\text{I},k}=1$ if $\bigIm{\tilde{Y}_k}\in\set{D}_{\text{I}}$ and $Y_{\text{I},k}=0$ otherwise, for some Borel sets $\set{D}_{\text{R}},\set{D}_{\text{I}}\subset\Reals$. (Here $\Reals$ denotes the set of real numbers, $\Re{\cdot}$ denotes the real part, and $\Im{\cdot}$ denotes the imaginary part.) In the coherent case, $\set{D}_{\text{R}}$ and $\set{D}_{\text{I}}$ may depend on the fading $\{H_k,\,k\in\Integers\}$.

The capacity per unit-energy of this channel is given by \eqref{eq:CUC_coh} or \eqref{eq:CUC_complexpr}, but with $Y$ replaced by $\bigl(Y_{\text{R}},Y_{\text{I}}\bigr)$, and with $\set{D}\subset\Complex$ replaced by $\bigl(\set{D}_{\text{R}},\set{D}_{\text{I}}\bigr)\subset\Reals\times\Reals$.

For symmetric quantizers, i.e., for
\begin{equation}
\set{D}_{\text{R}}  = \set{D}_{\text{I}} = \{u\in\Reals\colon u\geq 0\} \label{eq:sym}
\end{equation}
it follows from \cite{mezghaninossek07} and \cite[Th.~2]{singhdabeermadhow09_2} that, in the coherent case,
\begin{equation}
\label{eq:sym_coh}
\dot{C}_{\text{sym}}(0) = \frac{2}{\pi \sigma^2}.
\end{equation}
%(In fact, \cite{mezghaninossek07} shows that \eqref{eq:sym_coh} is achievable with quadrature phase-shift keying (QPSK). That this is indeed the capacity per unit-energy follows along the same lines as the proof of \cite[Th.~2]{singhdabeermadhow09_2}.)
In the noncoherent case, symmetric quantizers result in zero capacity and hence, by \eqref{eq:Cdotsup}, in zero capacity per unit-energy. Indeed, for \eqref{eq:sym}
\begin{equation*}
\Prob\bigl(Y_{\text{R}}=1\bigm| X=x\bigr)=\Prob\bigl(Y_{\text{I}}=1\bigm| X=x\bigr)=\frac{1}{2}, \quad x\in\Complex.
\end{equation*}
Since, conditioned on $X$, the random variables $Y_{\text{R}}$ and $Y_{\text{I}}$ are independent, this implies that the capacity is zero. Thus, quantizing the real and imaginary parts of the Rayleigh-fading channel using symmetric one-bit quantizers reduces the capacity per unit-energy by a factor of $2/\pi$ in the coherent case, and it reduces it to zero in the noncoherent case. In the following, we show that if we allow for \emph{asymmetric} quantizers, then we can fully recover the loss of $2/\pi$ incurred on the coherent Rayleigh-fading channel. For the noncoherent channel, we show that asymmetric quantizers achieve a positive capacity per unit-energy, albeit strictly smaller than $1/\sigma^2$.

\begin{theorem}
\label{thm:2bit_coh}
Consider the above channel model, and assume that the real and imaginary parts of $\tilde{Y}$ are quantized separately using a one-bit quantizer for each.
\begin{enumerate}
\item In the \emph{coherent case},
\begin{equation}
\dot{C}(0) = \frac{1}{\sigma^2}
\end{equation}
which can be achieved by some quantization regions
\begin{IEEEeqnarray}{lCl}
\set{D}_{\text{R}}^{\star} & = & \{u \in\Reals\colon u\geq\const{T}_{\text{R}}\}\label{eq:2bit_DR} \\
\set{D}_{\text{I}}^{\star} & = & \{u\in\Reals\colon u\geq\const{T}_{\text{I}}\} \label{eq:2bit_DI}
\end{IEEEeqnarray}
where $\const{T}_{\text{R}}$ and $\const{T}_{\text{I}}$ depend on $\Re{H\xi}$ and $\Im{H\xi}$, respectively.
\item In the \emph{noncoherent case},
\begin{equation}
\frac{2\,Q(1)}{\sigma^2}\leq \dot{C}(0) < \frac{1}{\sigma^2}
\end{equation}
with the upper bound being strict. Here $Q(\cdot)$ denotes the Gaussian $Q$-function \cite[Eq.~(1.3)]{simon02}. The lower bound can be achieved by the quantization regions \eqref{eq:2bit_DR} and \eqref{eq:2bit_DI} with $\const{T}_{\text{R}}=\const{T}_{\text{I}}=(|\xi|^2+\sigma^2)/2$.
\end{enumerate}
\end{theorem}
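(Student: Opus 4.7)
The plan is to exploit circular symmetry: since $H$ and $Z$ are both circularly symmetric, the real and imaginary parts of $\tilde{Y}$ are independent Gaussians---conditionally on $H$ in the coherent case, unconditionally in the noncoherent case---under both $X=\xi$ and $X=0$. Hence $(Y_{\text{R}},Y_{\text{I}})$ is a product of two Bernoulli random variables under each hypothesis, and the (conditional) relative entropy on the right-hand side of~\eqref{eq:CUC_coh} or~\eqref{eq:CUC_complexpr} decomposes as a sum of two one-dimensional Bernoulli divergences that can be handled independently.

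For the coherent case, conditioning on $H=h$ produces two independent real Gaussian subchannels with means $\Re{h\xi}$ and $\Im{h\xi}$ and a common noise variance. For each subchannel I would invoke the asymmetric-threshold technique of~\cite{kochlapidoth11_1}: taking $\const{T}_{\text{R}}$ to depend on $\Re{h\xi}$ and $\const{T}_{\text{I}}$ on $\Im{h\xi}$, with both thresholds drifting to infinity appropriately as the corresponding signal means shrink, each per-component Bernoulli KL divergence approaches the unquantized real Gaussian KL divergence for that subchannel. Summing the two contributions and averaging over $H$ recovers the unquantized complex-Gaussian ratio, which already equals the DPI upper bound $1/\sigma^{2}$ of~\eqref{eq:CUC_unquant}; hence equality holds.

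For the noncoherent lower bound, circular symmetry again forces $\Re{\tilde{Y}}$ and $\Im{\tilde{Y}}$ to be independent zero-mean Gaussians under both $X=\xi$ and $X=0$, so the relative entropy on the right-hand side of~\eqref{eq:CUC_complexpr} equals $2\,d(p_{1}\|p_{0})$ with $p_{1}$ and $p_{0}$ explicit Gaussian-tail probabilities. Substituting the prescribed threshold $\const{T}_{\text{R}}=\const{T}_{\text{I}}=(|\xi|^{2}+\sigma^{2})/2$---calibrated so that the $Q$-function argument under $X=\xi$ evaluates to make $p_{1}=Q(1)$---and extracting the dominant term of $d(p_{1}\|p_{0})$ in the regime where $p_{0}\to 0$ produces the stated lower bound $2Q(1)/\sigma^{2}$.

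The strict upper bound is the main obstacle. The data processing inequality already gives $D\le D_{\text{unq}}$, and $D_{\text{unq}}(\xi)/|\xi|^{2}<1/\sigma^{2}$ strictly for every finite $\xi$, approaching $1/\sigma^{2}$ only in the limit $|\xi|\to\infty$. To promote $\le$ to a strict $<$ after taking the supremum jointly over $\xi$ and $(\const{T}_{\text{R}},\const{T}_{\text{I}})$, I would combine three facts: (i) because $(Y_{\text{R}},Y_{\text{I}})$ takes only four values, $D\le\log 4$, forcing $D/|\xi|^{2}\to 0$ as $|\xi|\to\infty$; (ii) a Taylor expansion of $d(p_{1}\|p_{0})$ about $|\xi|=0$ yields $D=O(|\xi|^{4})$, so $D/|\xi|^{2}\to 0$ in the opposite limit as well; (iii) on any compact intermediate range of $|\xi|$, the Cartesian half-plane quantizer is not a sufficient statistic for the variance test between two zero-mean complex Gaussians of different variance, so DPI is strict with a positive gap. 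A continuity/compactness argument on the joint parameter space $(\xi,\const{T}_{\text{R}},\const{T}_{\text{I}})$ then yields a uniform gap and hence $\dot{C}(0)<1/\sigma^{2}$. Making the gap quantitative---in particular controlling the behaviour uniformly as the thresholds are pushed towards degenerate configurations---is the technical heart of the argument and should parallel the strict-inequality proof for the single-quantizer case in Theorem~\ref{theorem}.
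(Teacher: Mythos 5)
Your high-level decomposition is sound and is almost certainly the intended route: conditioned on $X$ (and on $H$ in the coherent case), $\Re{\tilde{Y}}$ and $\Im{\tilde{Y}}$ are independent real Gaussians, so the relative entropy splits into a sum of two Bernoulli divergences, and the coherent subchannels are mean tests while the noncoherent subchannels are variance tests. The coherent part is essentially right, though stated backwards: in the Verd\'u framework one takes $|\xi|\to\infty$, so the per-component means $\Re{H\xi}$ and $\Im{H\xi}$ \emph{grow}, and the thresholds grow proportionally (exactly as $\const{T}=\mu|h||\xi|$ does in the paper's proof of Theorem~\ref{theorem}, Part~1); they do not ``drift to infinity as the signal means shrink.''

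The noncoherent lower bound is where there is a real gap. You assert that the prescribed threshold is ``calibrated so that $p_1=Q(1)$,'' but it is not: with $\Re{\tilde{Y}}\,|\,X=\xi\sim\mathcal{N}\bigl(0,(|\xi|^2+\sigma^2)/2\bigr)$ and $\const{T}_{\text{R}}=(|\xi|^2+\sigma^2)/2$, one gets
\begin{equation*}
p_1=Q\!\left(\frac{\const{T}_{\text{R}}}{\sqrt{(|\xi|^2+\sigma^2)/2}}\right)=Q\!\left(\sqrt{\tfrac{|\xi|^2+\sigma^2}{2}}\right)\longrightarrow 0,
\end{equation*}
so the dominant term of $2\,d(p_1\|p_0)$ does not produce $2Q(1)/\sigma^2$. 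Moreover, even after correcting to $\const{T}_{\text{R}}=\sqrt{(|\xi|^2+\sigma^2)/2}$ (which does give $p_1=Q(1)$), a one-sided region $\{u\geq\const{T}_{\text{R}}\}$ only yields the limit $Q(1)/\sigma^2$, not $2Q(1)/\sigma^2$. The constant $2Q(1)/\sigma^2$ corresponds to a two-sided region $\{u:|u|\geq\const{T}_{\text{R}}\}$, which is also the Neyman--Pearson-optimal form for the underlying variance test: then $p_1=2Q(1)$, $-\log p_0\approx(|\xi|^2+\sigma^2)/(2\sigma^2)$, and $2\,d(p_1\|p_0)/|\xi|^2\to 2Q(1)/\sigma^2$. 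You need to redo this calibration carefully; as written, the half-line region and the threshold formula do not reproduce the claimed constant.

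For the strict upper bound, your combination of (i) $D\le\log 4$, (ii) $D=O(|\xi|^4)$ near zero, and (iii) strict DPI on a compact range does not yet close the argument. Fact (i) only controls the prelimit for each fixed $|\xi|$; the supremum over $(\const{T}_{\text{R}},\const{T}_{\text{I}})$ and $|\xi|$ together can still approach $1/\sigma^2$ unless you rule out a sequence along which the quantizer becomes asymptotically sufficient (thresholds pushed to degenerate configurations as $|\xi|\to\infty$). You flag this as the ``technical heart,'' but that is precisely what must be proved, and the compactness/continuity device offers no a~priori control over the noncompact threshold domain. A cleaner route mirrors the paper's proof of Theorem~\ref{theorem}, Part~2: bound each Bernoulli divergence by $p_1\log(1/p_0)+1/e$, use a Gaussian-tail upper bound on $p_1$ and a lower bound on $p_0$ to get, per component, something of the form $c\,|\xi|^2/\sigma^2+O(1)$ with $c<1/2$ \emph{uniformly in the threshold}, and then take the $|\xi|\to\infty$ limit. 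That gives a quantitative, threshold-uniform gap in one stroke, which your sketch does not.
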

\begin{proof}
Omitted.
\end{proof}

%\section{Proofs}
%\label{sec:proofs}
\section{Proof of Theorem~\ref{theorem}}
\label{sec:proof}

\subsection{Part~\ref{thm:coherent})}
\label{sub:coherent}
We show that a radial quantizer \eqref{eq:D_circsym} achieves the rate per unit-energy $1/\sigma^2$. Together with \eqref{eq:CUC_unquant}, this proves Theorem~\ref{thm:coherent}.

To this end, we first note that, conditioned on $(H,X)=(h,x)$, the squared magnitude of $\frac{2}{\sigma^2}\tilde{Y}$ is a noncentral chi-square distribution with degree $2$ and noncentrality parameter $\frac{2}{\sigma^2} |h|^2 |x|^2$ \cite[p.~8]{simon02}. Consequently, a radial quantizer yields \cite[Sec.~2-E]{simon02}
\begin{equation*}
\Prob\bigl(Y=1\bigm|H=h,X=x\bigr) = Q_1\Biggl(\!\sqrt{\frac{2}{\sigma^2}}|h| |x|, \sqrt{\frac{2}{\sigma^2}}\const{T}\!\Biggr)%\label{eq:proof_prop7_1}
\end{equation*}
where $Q_1(\cdot,\cdot)$ denotes the first-order Marcum $Q$-function \cite[Eq.~(2.20)]{simon02}. Furthermore, for $x=0$ this becomes
\begin{equation*}
\Prob\bigl(Y=1\bigm|H=h,X=0\bigr) = e^{-\frac{\const{T}^2}{\sigma^2}}.
\end{equation*}
We thus obtain %\eqref{eq:CUC_coh}
\begin{IEEEeqnarray}{lCl}
\IEEEeqnarraymulticol{3}{l}{D\bigl(P_{Y|H,X=\xi}\bigm\| P_{Y|H,X=0}\bigm| P_H\bigr)}\nonumber\\
\quad & = & \E{Q_1\Biggl(\!\sqrt{\frac{2}{\sigma^2}}|H||\xi|, \sqrt{\frac{2}{\sigma^2}}\const{T}\!\Biggr)\log\frac{1}{e^{-\frac{\const{T}^2}{\sigma^2}}}} \nonumber\\
& & {} + \E{\Biggl\{\!1-Q_1\Biggl(\!\sqrt{\frac{2}{\sigma^2}}|H||\xi|, \sqrt{\frac{2}{\sigma^2}}\const{T}\!\Biggr)\!\Biggr\}\log\frac{1}{1-e^{-\frac{\const{T}^2}{\sigma^2}}}} \nonumber\\
& & {} - \E{H_b\Biggl(Q_1\Biggl(\!\sqrt{\frac{2}{\sigma^2}}|H||\xi|,\sqrt{\frac{2}{\sigma^2}}\const{T}\!\Biggr)\Biggr)} \nonumber\\
& \geq & \E{Q_1\Biggl(\!\sqrt{\frac{2}{\sigma^2}}|H||\xi|, \sqrt{\frac{2}{\sigma^2}}\const{T}\!\Biggr) \frac{\const{T}^2}{\sigma^2}} - \log 2 \label{eq:proof_prop7_3}
\end{IEEEeqnarray}
where $H_b(\cdot)$ denotes the binary entropy function, i.e., 
%$H_b(p)\triangleq -p \log p - (1-p)\log(1-p)$, $0<p<1$ and $H_b(0)=H_b(1)=0$. 
\begin{equation*}
H_b(p) \triangleq \left\{\begin{array}{ll} p \log\frac{1}{p} + (1-p) \log\frac{1}{1-p},  & \textnormal{for $0<p<1$} \\ 0, \quad & \textnormal{for $p=0$ or $p=1$.} \end{array}\right.
\end{equation*}
Here the inequality follows because the second term in the first step is nonnegative, and because the binary entropy function is upper-bounded by $\log 2$.

We choose $\const{T} = \mu |h| |\xi|$ for some fixed $\mu\in(0,1)$ and lower-bound the RHS of \eqref{eq:proof_prop7_3} using the general lower bound on the first-order Marcum $Q$-function \cite[Sec.~C-2, Eq.~(C.24)]{simon02}
\begin{equation*}
Q_1(\alpha,\beta) \geq 1- \frac{1}{2}\biggl[e^{-\frac{(\alpha-\beta)^2}{2}}-e^{-\frac{(\alpha+\beta)^2}{2}}\biggr], \quad \alpha>\beta\geq 0.
\end{equation*}
We thus obtain for the first term on the RHS of \eqref{eq:proof_prop7_3}
\begin{IEEEeqnarray}{lCl}
\IEEEeqnarraymulticol{3}{l}{\E{Q_1\Biggl(\!\sqrt{\frac{2}{\sigma^2}}|H||\xi|, \sqrt{\frac{2}{\sigma^2}}\mu|H||\xi|\!\Biggr) \frac{\mu^2 |H|^2|\xi|^2}{\sigma^2}}}\nonumber\\
\qquad & \geq & \frac{\mu^2\E{|H|^2}|\xi|^2}{\sigma^2} \nonumber\\
& & {}  -\frac{1}{2}\E{\exp\biggl(-\frac{|H|^2|\xi|^2}{\sigma^2}(1-\mu)^2\biggr)\frac{\mu^2|H|^2|\xi|^2}{\sigma^2}} \nonumber\\
& & {} +\frac{1}{2}\E{\exp\biggl(-\frac{|H|^2|\xi|^2}{\sigma^2}(1+\mu)^2\biggr)\frac{\mu^2|H|^2|\xi|^2}{\sigma^2}} \label{eq:proof_prop7_before5}\nonumber\\
& \geq & \frac{\mu^2\E{|H|^2}|\xi|^2}{\sigma^2} - \frac{\mu^2}{2\,e\,(1-\mu)^2} \label{eq:proof_prop7_5}
\end{IEEEeqnarray}
where the last step follows because $0\leq x e^{-\alpha x}\leq 1/(e\alpha)$ for every $x\geq 0$ and $\alpha>0$.
%where the last step follows because the third term on the RHS of \eqref{eq:proof_prop7_before5} is nonnegative, and because the term inside the second expectation on the RHS of \eqref{eq:proof_prop7_before5} is upper-bounded by $\mu^2/\bigl(e\,(1-\mu)^2\bigr)$, which implies that the same is true for the expectation itself.

Combining \eqref{eq:proof_prop7_5} with \eqref{eq:proof_prop7_3}, and computing its ratio to $|\xi|^2$ in the limit as $|\xi|^2$ tends to infinity, yields
\begin{equation}
\dot{C}(0) \geq \frac{\mu^2\E{|H|^2}}{\sigma^2} = \frac{\mu^2}{\sigma^2}.
\end{equation}
Theorem~\ref{thm:coherent} follows then by letting $\mu$ tend to one.

\subsection{Part~\ref{thm:noncoherent})}
\label{sub:noncoherent}
We first note that, by the Data Processing Inequality for relative entropy \cite[Sec.~2.9]{coverthomas91}, the relative entropy on the RHS of \eqref{eq:CUC_complexpr} is upper-bounded by the relative entropy corresponding to the unquantized channel, i.e., \cite[Eq.~(64)]{verdu02}
\begin{equation}
\frac{D\bigl(P_{Y|X=\xi}\bigm\| P_{Y|X=0}\bigr)}{|\xi|^2} \leq \frac{1}{\sigma^2} - \frac{\log\Bigl(1+\frac{|\xi|^2}{\sigma^2}\Bigr)}{|\xi|^2}. \label{eq:proof_prop8_1}
\end{equation}
%\begin{IEEEeqnarray}{lCl}
%\frac{D\bigl(P_{Y|X=\xi}\bigm\| P_{Y|X=0}\bigr)}{|\xi|^2} & \leq & \frac{D\bigl(P_{\tilde{Y}|X=\xi}\bigm\| P_{\tilde{Y}|X=0}\bigr)}{|\xi|^2} \nonumber\\
%& = & \frac{1}{\sigma^2} - \frac{\log\Bigl(1+\frac{|\xi|^2}{\sigma^2}\Bigr)}{|\xi|^2} \label{eq:proof_prop8_1}
%\end{IEEEeqnarray}
%where $P_{\tilde{Y}|X=x}$ denotes the distribution of $\tilde{Y}$ given that the input is $x$. Here the second step follows by noting that $P_{\tilde{Y}|X=x}$ is a zero-mean, circularly-symmetric, complex Gaussian distribution with variance $|x|^2+\sigma^2$.
Consequently, the capacity per unit-cost \eqref{eq:CUC_complexpr} is strictly smaller than $1/\sigma^2$ unless $|\xi|$ tends to infinity. It thus remains to show that
\begin{equation}
\label{eq:proof_prop8_claim}
\varlimsup_{|\xi|\to\infty} \sup_{\set{D}} \frac{D\bigl(P_{Y|X=\xi}\bigm\| P_{Y|X=0}\bigr)}{|\xi|^2} < \frac{1}{\sigma^2}.
\end{equation}
To this end, we first note that, for every $\xi\neq 0$, the supremum in \eqref{eq:proof_prop8_claim} over all quantization regions $\set{D}$ can be replaced with the supremum over all radial quantizers \eqref{eq:D_circsym}. Indeed, for every quantization region satisfying
\begin{equation*}
%\label{eq:proof_prop8_beta}
\Prob\bigl(Y=1\bigm|X=\xi\bigr) = \beta, \quad 0<\beta< 1
\end{equation*}
the relative entropy
\begin{IEEEeqnarray}{lCl}
D\bigl(P_{Y|X=\xi}\bigm\| P_{Y|X=0}\bigr) & = & \beta\log\frac{1}{\Prob\bigl(Y=1\bigm| X=0\bigr)} \IEEEeqnarraynumspace\nonumber\\
\IEEEeqnarraymulticol{3}{r}{{} + (1-\beta)\log\frac{1}{1-\Prob\bigl(Y=1|X=0\bigr)} - H_b(\beta)\IEEEeqnarraynumspace}\label{eq:proof_prop8_D_beta}
\end{IEEEeqnarray}
is a convex function of $\Prob\bigl(Y=1\bigm|X=0\bigr)$. Thus, for every $0<\beta<1$, the RHS of \eqref{eq:proof_prop8_D_beta} is maximized for the quantization region that minimizes (or maximizes) $\Prob\bigl(Y=1\bigm| X=0\bigr)$ while holding $\Prob\bigl(Y=1\bigm|X=\xi\bigr)=\beta$ fixed. By the Neyman-Pearson lemma \cite{neymanpearson32}, such a quantization region has the form
\begin{equation}
\label{eq:proof_prop8_NP_D_1}
\set{D}_{\star} = \biggl\{\tilde{y}\in\Complex\colon \frac{f(\tilde{y}|0)}{f(\tilde{y}|\xi)}\leq \Lambda \biggr\}, \quad \Lambda>0
\end{equation}
(or the complement thereof), where $f(\tilde{y}|x)$ denotes the conditional density of $\tilde{Y}$, conditioned on $X=x$, and where $\Lambda$ is such that $\Prob\bigl(\tilde{Y}\in\set{D}_{\star}\bigm|X=\xi\bigr)=\beta$.
(Note that for every $0<\beta< 1$ there exists such a $\Lambda$ since, for the above channel model \eqref{eq:channel}, $\Prob\bigl(\tilde{Y}\in\set{D}_{\star}\bigm|X=\xi\bigr)$ is a continuous, strictly increasing function of $\Lambda>0$.) 
The likelihood ratio on the RHS of \eqref{eq:proof_prop8_NP_D_1} is readily evaluated as
\begin{equation*}
\frac{f(\tilde{y}|0)}{f(\tilde{y}|\xi)} = \biggl(1+\frac{|\xi|^2}{\sigma^2}\biggr) e^{-\frac{|\tilde{y}|^2}{\sigma^2}\frac{|\xi|^2}{\sigma^2+|\xi|^2}}, \quad \tilde{y}\in\Complex.
\end{equation*}
Consequently, $\set{D}_{\star}$ is the same as \eqref{eq:D_circsym} with
\begin{equation*}
\const{T} = \sigma\sqrt{\biggl(1+\frac{\sigma^2}{|\xi|^2}\biggr)\log\left(\frac{1+\frac{|\xi|^2}{\sigma^2}}{\Lambda}\right)}.
\end{equation*}
We thus obtain that, for every $\xi\neq 0$, the relative entropy $D(P_{Y|X=\xi}\| P_{Y|X=0})$ is maximized by a radial quantizer \eqref{eq:D_circsym}.

For such a quantizer, we have
\begin{equation*}
\Prob\bigl(Y=1\bigm| X=x\bigr) = \exp\biggl(-\frac{\const{T}^2}{|x|^2+\sigma^2}\biggr)\end{equation*}
which yields
\begin{IEEEeqnarray}{lCl}
\IEEEeqnarraymulticol{3}{l}{D\bigl(P_{Y|X=\xi}\bigm\| P_{Y|X=0}\bigr)}\nonumber\\
\quad & = & e^{-\frac{\const{T}^2}{|\xi|^2+\sigma^2}} \log\frac{1}{e^{-\frac{\const{T}^2}{\sigma^2}}}\nonumber\\
& & {} + \biggl[1-e^{-\frac{\const{T}^2}{|\xi|^2+\sigma^2}}\biggr] \log\frac{1}{1-e^{-\frac{\const{T}^2}{\sigma^2}}}  - H_b\biggl(e^{-\frac{\const{T}^2}{|\xi|^2+\sigma^2}}\biggr)\nonumber\\
& \leq &  \frac{\const{T}^2}{\sigma^2} e^{-\frac{\const{T}^2}{|\xi|^2+\sigma^2}} - \biggl[1-e^{-\frac{\const{T}^2}{\sigma^2}}\biggr] \log\biggl(1-e^{-\frac{\const{T}^2}{\sigma^2}}\biggr) \nonumber\\
& \leq & \frac{\const{T}^2}{\sigma^2} e^{-\frac{\const{T}^2}{|\xi|^2+\sigma^2}} + \frac{1}{e} \label{eq:proof_prop8_2}
\end{IEEEeqnarray}
where the second step follows because $H_b(\cdot)\geq 0$ and $\exp\bigl(-\const{T}^2/(|\xi|^2+\sigma^2)\bigr) \geq \exp\bigl(-\const{T}^2/\sigma^2\bigr)$; and the third step follows because $-x\log x\leq \frac{1}{e}$, $0<x<1$.

The first term on the RHS of \eqref{eq:proof_prop8_2} is maximized for \mbox{$\const{T}^2=|\xi|^2+\sigma^2$}. The RHS of \eqref{eq:proof_prop8_2} is thus upper-bounded by
\begin{equation}
\label{eq:proof_prop8_3}
D\bigl(P_{Y|X=\xi}\bigm\| P_{Y|X=0}\bigr) \leq \frac{|\xi|^2}{e\, \sigma^2} + \frac{2}{e}.
\end{equation}
Dividing the RHS of \eqref{eq:proof_prop8_3} by $|\xi|^2$, and computing the limit as $|\xi|$ tends to infinity, yields
\begin{equation}
\varlimsup_{|\xi|\to\infty}\sup_{\set{D}} \frac{D\bigl(P_{Y|X=\xi}\bigm\| P_{Y|X=0}\bigr)}{|\xi|^2} \leq \frac{1}{e\,\sigma^2} < \frac{1}{\sigma^2}.
\end{equation}
This proves Theorem~\ref{thm:noncoherent}.

\section*{Acknowledgment}
Enlightening discussions with A.~Martinez and comments from S.~Verd\'u are gratefully acknowledged.

%\bibliographystyle{IEEEtran}           % in order of first citation
%\bibliography{/Users/tokoch/Library/texmf/tex/bibtex/header_short,/Users/tokoch/Library/texmf/tex/bibtex/bibliofile}

% Generated by IEEEtran.bst, version: 1.13 (2008/09/30)

% trigger a \newpage just before the given reference
% number - used to balance the columns on the last page
% adjust value as needed - may need to be readjusted if
% the document is modified later
%\IEEEtriggeratref{8}
% The "triggered" command can be changed if desired:
%\IEEEtriggercmd{\enlargethispage{-5in}}

\end{document}